\newtheorem{proof}{\textbf{Proof}}
\newtheorem{lma}{\textbf{Lemma}}
\begin{document}
\title{Selective Uplink Training for Massive MIMO Systems}
\author{\IEEEauthorblockN{Changming Li$^*$, Jun Zhang$^*$, Shenghui Song$^*$, and K. B. Letaief$^*$$^\dag$, \emph{Fellow, IEEE} }
\IEEEauthorblockA{$^*$Dept. of ECE, The Hong Kong University of Science and Technology, $^\dag$Hamad Bin Khalifa University, Doha, Qatar\\
Email: $^*$\{cliao, eejzhang, eeshsong, eekhaled\}@ust.hk, $^\dag$kletaief@hbku.edu.qa}
\thanks{This work is partially supported by the Hong Kong Research Grants Council under Grant No. 16211815.}
}



%


\maketitle

\begin{abstract}
As a promising technique to meet the drastically growing demand for both high throughput and uniform coverage in the fifth generation (5G) wireless networks, massive multiple-input multiple-output (MIMO) systems have attracted significant attention in recent years. However, in massive MIMO systems, as the density of mobile users (MUs) increases, conventional uplink training methods will incur prohibitively high training overhead, which is proportional to the number of MUs. In this paper, we propose a selective uplink training method for massive MIMO systems, where in each channel block only part of the MUs will send uplink pilots for channel training, and the channel states of the remaining MUs are predicted from the estimates in previous blocks, taking advantage of the channels' temporal correlation. We propose an efficient algorithm to dynamically select the MUs to be trained within each block and determine the optimal uplink training length. Simulation results show that the proposed training method provides significant throughput gains compared to the existing methods, while much lower estimation complexity is achieved. It is observed that the throughput gain becomes higher as the MU density increases.
\end{abstract}
\begin{IEEEkeywords}
\emph{Uplink massive MIMO, selective training, temporal correlation, dynamic user selection}.
\end{IEEEkeywords}

%
\IEEEpeerreviewmaketitle

\section{Introduction}
With the advances of the Internet of Things (IoT) and Machine-to-Machine (M2M) communications, wireless data traffic is witnessing an unprecedented growth. In order to provide seamless wireless access and to achieve satisfactory quality of service (QoS), massive multiple-input multiple-output (MIMO) has recently emerged as a promising technology for the next generation wireless networks. By equipping base stations (BSs) with a large number of antennas, massive MIMO systems bring various attractions such as higher system throughput and energy-efficiency \cite{massive_mimo}.

To fully exploit the benefits of massive MIMO systems, transmission protocols, such as interference management and resource allocation strategies, should be carefully designed, in which the channel side information (CSI) plays a critical role. It has been demonstrated that the achievable performance of massive MIMO systems is closely related to the quality of the available CSI \cite{massive_mimo}. However, obtaining the high-dimensional CSI in massive MIMO systems requires a substantial amount of training and feedback overhead, thanks to the large number of antennas at BSs. Consequently, time division duplexing (TDD) massive MIMO has emerged as an attractive candidate, for which the CSI for both the uplink and downlink is obtained via uplink training, with the overhead proportional to the number of mobile users (MUs) \cite{inter_cell_interference}.

However, as the density of mobile devices keeps increasing, the uplink training overhead will grow proportionally, which will limit the spectrum efficiency of massive MIMO systems \cite{massive_fundamentals}. Therefore, innovative methodologies for training overhead reduction will be needed. One such method is to adopt non-orthogonal pilots for channel training, e.g., \emph{Gaussian random} sequences, \emph{generalized Welch bound equality} sequences \cite{user_capacity}, or \emph{Grassmannian subspace packing} sequences \cite{downlink_training}. However, non-orthogonal pilots are generally difficult to design and the performance characterization is typically intractable. This has motivated the development of alternative methods to reduce training overhead, while retaining the simple-to-implement orthogonal pilots. One way to achieve this is to train and transmit to a subset of MUs during each block, which can be achieved via user scheduling, e.g., \emph{Round-Robin Scheduling} (RRS) or \emph{Priority-based Scheduling} (PS) \cite{multiuser_MIMO, achieving_massive}. In this way, the training overhead is reduced to be proportional to the number of the MUs in the subset. However, this comes at the expense of lower spectral efficiency, as the MUs are served in a time division multiple access (TDMA) manner.

To effectively reduce training overhead and improve spectral efficiency for massive MIMO systems, it is critical to exploit the unique structures of massive MIMO channels, such as the sparse structure in the angular domain \cite{sparsity, sparsity2} and antenna correlation \cite{jointspatial}. In this paper, we exploit another key characteristic, i.e., the temporal correlation of the channel. In most of the existing works, simplified channel models, e.g., the independent and identically distributed (i.i.d.) block fading channel model, are assumed for ease of analysis \cite{user_capacity,how_much,optimal_channel}. However, these models cannot capture the channel's temporal correlation, which exists especially in low-mobility environments. There are some recent works applying Kalman filter-based training methods to exploit the channel's temporal correlation \cite{downlink_training}, which, however, suffer from high computational complexity.

In this paper, we investigate the uplink training in TDD massive MIMO systems, and propose a selective training method which effectively reduces training overhead and significantly improves spectrum efficiency. In each channel block, the BS selects part of the MUs for uplink training, while the CSI of the remaining MUs is obtained by prediction based on the estimates in previous blocks, exploiting the temporal correlation. In the data transmission phase, the BS serves all the MUs simultaneously with the obtained CSI, either from channel training or prediction. Thus the proposed method enjoys much lower training overhead compared to the full training case, and also much lower estimation complexity, while CSI is obtained for each MU. By exploiting the temporal correlation, we propose an effective algorithm to dynamically select the MUs to be trained in each block, and determine the optimal training length. Simulation results show that the proposed selective training method achieves noticeable performance improvement compared to existing methods. In addition, as the MU density increases, the proposed method provides higher performance gains.

Notations: $(\cdot)^T$: transpose, $(\cdot)^H$: conjugate transpose, $(\cdot)^{-1}$: inverse, $|\cdot|$: determinant, $\mathrm{tr}(\cdot)$: trace, $\left\|\cdot\right\|_F$: Frobenius norm, $\mathbb{E}[\cdot]$: expectation, $\circ$: Hadamard product, $\mathrm{diag}(\cdot)$: diagonal matrix, $\mathbb{C}$: complex number, $\mathbb{Z}^+$: positive integer, $\mathrm{Var}(\cdot)$: variance, $\mathrm{card}(\cdot)$: cardinality.

\section{System Model and Problem Formulation}\label{system}
 We consider the uplink transmission in a TDD massive MIMO system with an $N$-antenna BS and $K$ single-antenna MUs, as shown in Fig.\ref{MIMO}. Uplink channel estimation is considered, where the BS obtains CSI through either training-based estimation, i.e., to estimate the CSI based on the received pilots sent by the MUs, or prediction, i.e., to predict the CSI from the previous estimates.

\begin{figure}[htbp]
\centering\includegraphics[height=4.7cm]{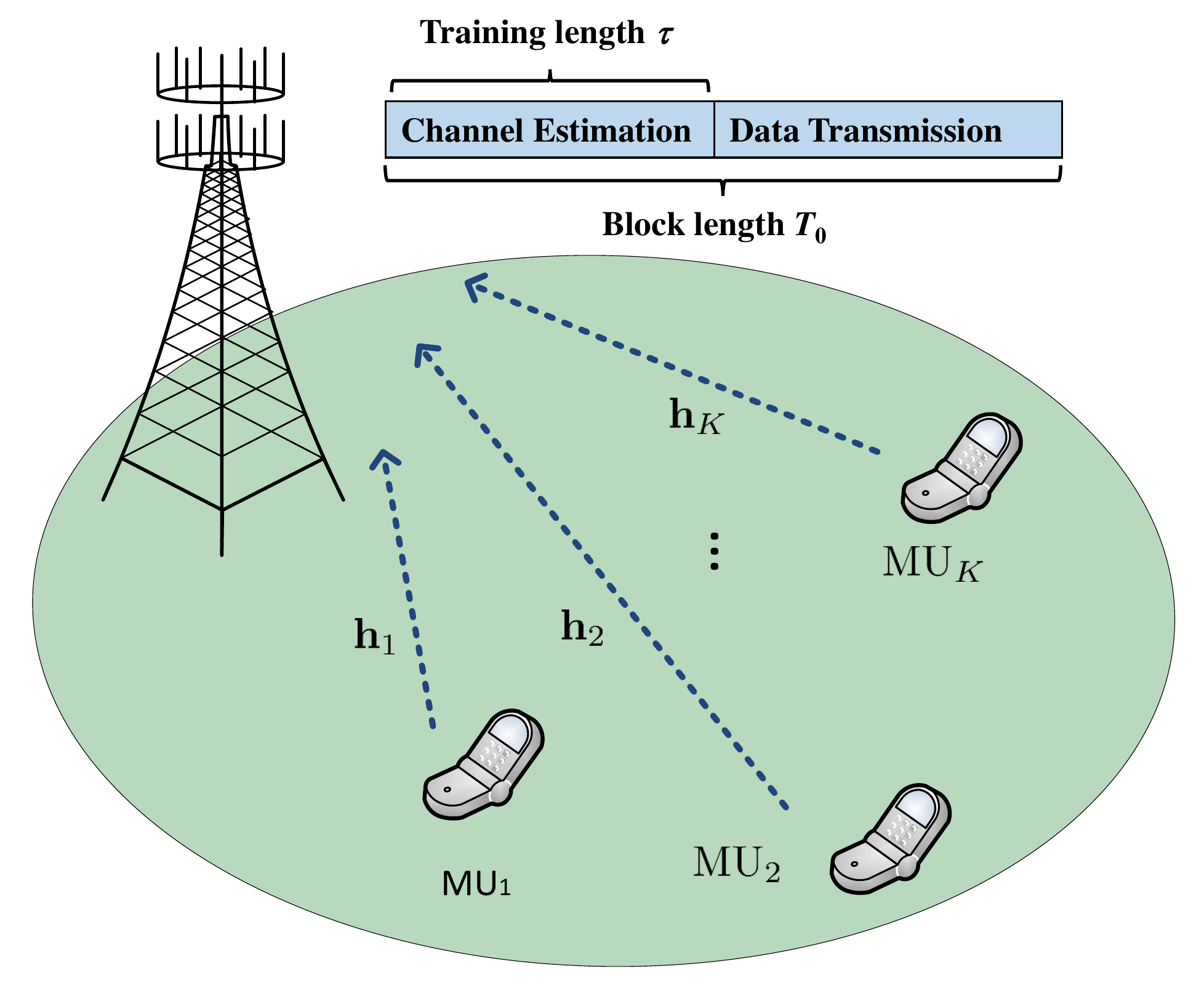}
\caption{A TDD massive MIMO system.}\label{MIMO}
\end{figure}

\subsection{Channel Model}
The channels are assumed to be block fading with coherence time $T_c$ and coherence bandwidth $B_c$, i.e., the channels remain static within each channel block, but vary among different channel blocks. Define ${T_0} \buildrel \Delta \over = {T_c}{B_c}$ as the \emph{block length}, which denotes the number of channel uses in each block. In particular, the channel vector from the $k$-th MU to the BS in the $b$-th channel block is denoted as $\mathbf{h}_{k,b} \in {\mathbb{C}^N}$. For convenience, we define the channel matrix for the $b$-th channel block as ${\mathbf{H}_b} \buildrel \Delta \over =[{\mathbf{h}_{1,b}},...,{\mathbf{h}_{K,b}}] \in {\mathbb{C}^{N \times K}}$. Motivated by the increasing density of mobile devices, we consider the scenarios in which the number of MUs is relatively large compared to the block length ${T_0}$, and denote $\alpha\buildrel \Delta \over = \frac{K}{T_0}$. It is worthwhile to note that such scenarios have not been addressed in existing studies \cite{massive_fundamentals}, although they are realistic and important to consider for future massive MIMO systems.

We focus on a low mobility environment and assume the channel spatial and temporal statistics remain unchanged within $J$ consecutive channel blocks. Specifically, the channel matrix $\mathbf{H}_b$ in the $b$-th channel block can be written as $\mathbf{H}_b = \mathbf{L}\circ\mathbf{G}_b,\forall b \in \mathcal{J}$,
where $\mathcal{J}\buildrel \Delta \over =\{0,1,...,J-1\}$, $\mathbf{G}_b=[g_{ik}]_b\in\mathbb{C}^{N \times K}$ and $\mathbf{L}=[l_{ik}]\in\mathbb{C}^{N\times K}$ represent the small-scale and large-scale fading channel coefficient matrix, respectively. Without loss of generality, we assume $\mathbf{H}_b$ evolves according to a first-order stationary Gauss-Markov process \cite{downlink_training}, i.e.,
\begin{equation}\label{GM}
\mathbf{H}_b=\mathbf{C}\circ\mathbf{H}_{b-1}+\mathbf{Z}_b,\forall b \in \mathcal{J},
\end{equation}
where $\mathbf{C}=[c_{ik}]\in\mathbb{C}^{N\times N}$ is the temporal correlation coefficient matrix, which depends on the channel instantiation interval and the maximum Doppler frequency according to Jake's model \cite{jakes}, and $\mathbf{Z}_b=[z_{ik}] \in \mathbb{C}^{N \times N}$ is an innovation process. For ease of notation, we denote $v_{ik}$ as the variance of $h_{ik}$, i.e., $v_{ik}=\mathrm{Var}(h_{ik})$, and that $z_{ik}\sim \mathcal{CN}(0, (1-c_{ik}^2)v_{ik})$ is independent from the channel realization history.

\subsection{Proposed Selective Training for Uplink Channel Estimation}\label{uplinktraining}
We define $\tau$, where $0<\tau<T_0$, as the \emph{training length}, i.e., the first $\tau$ channel uses will be utilized for training in each channel block. At the $b$-th channel block, the $k$-th MU sends the training sequence $\mathbf{x}_k\in \mathbb{C}^{1\times \tau}$. Thus, the received signal $\mathbf{Y}_{b}$ at the BS is given as
\begin{equation}\label{signalmodel}
\mathbf{Y}_b=\mathbf{H}_b \mathbf{X}_b+\mathbf{N}_b,
\end{equation}
where $\mathbf{Y}_b \in \mathbb{C}^{N \times \tau}$, $\mathbf{X}_b\buildrel \Delta \over =[\mathbf{x}_1^T,...,\mathbf{x}_K^T]^T\in\mathbb{C}^{K\times\tau}$ is the training matrix, and $\mathbf{N}_b=[n_{ik}] \in \mathbb{C}^{N \times \tau}$ denotes the additive Gaussian noise with unit variance.
To facilitate the analysis and practical implementation, we adopt orthogonal sequences for channel training:
\begin{equation}\label{othogonal}
\mathbf{X}_b \in \mathcal{X}\buildrel \Delta \over =\{\mathbf{F}:\mathbf{F}\in\mathbb{C}^{K\times\tau},\mathbf{F}\mathbf{F}^H=\tau\mathbf{I}_K\}.
\end{equation}

Channel training design for massive MIMO systems is a highly non-trivial task due to the huge amount of CSI to be obtained. If the BS were to perform training-based estimation for all the MUs, as in conventional methods, the training overhead with orthogonal training would become extremely heavy and would occupy most of the available radio resources. In order to reduce the training overhead, we propose a selective training method by leveraging the benefits of orthogonal pilots and exploiting the channel's temporal correlation. Specifically, in each channel block, the BS trains part of the MUs, and predicts the channel states for the remaining MUs according to the temporal correlation.
\subsubsection{Training-based Estimation}
In the training-based estimation, the BS estimates the CSI based on the received pilot symbols by scalar minimum mean-square error (MMSE) estimation \cite{optimal_channel}. In other words, the BS first obtains the received signals for the $k$-th MU:
\begin{equation}\label{orthogonal}
\begin{split}
\mathbf{s}_{k,b}=\mathbf{Y}_b \mathbf{x}_k^H=\tau \mathbf{h}_{k,b}+\mathbf{N}_b \mathbf{x}_k^H,k\in \mathcal{K}_T,
\end{split}
\end{equation}
where $\mathbf{N}_b \mathbf{x}_k^H \sim \mathcal{CN}(0,\tau\mathbf{I}_N)$, $\mathcal{K}$ and $\mathcal{K}_T$ denote the set of MUs and the set of MUs that are selected to be trained, respectively. After normalization, we have
$\mathbf{r}_{k,b} = \frac{1}{{\sqrt \tau  }}{\mathbf{s}_{k,b}}= \sqrt \tau  {\mathbf{h}_{k,b}} + {\mathbf{N}_{k,b}}$.
Thus, the scalar estimation channel from the $k$-th MU to the $i$-th receive antenna is given by ${r_{ik,b}} = \sqrt \tau  {h_{ik,b}} + {n_{ik,b}}$, where $n_{ik,b}\sim \mathcal{CN}(0,1)$. By decomposing $h_{ik,b}$ into the estimate and the estimation error, i.e., $h_{ik,b}=\widehat h^t_{ik,b}+\widetilde h^t_{ik,b}$, where $\widehat h^t_{ik,b}$ and $\widetilde h^t_{ik,b}$ are independent, we can compute the MMSE estimate of $h_{ik,b}$ given the observation $r_{ik,b}$:
\begin{equation}\label{scalarestimation}
\widehat h^t_{ik,b}= \tau v_{ik}(1+\tau v_{ik})^{-1} r_{ik,b}.
\end{equation}

\subsubsection{Linear Prediction}
For the MUs that do not send pilots during the current block, the BS will use the obtained CSI from the last channel block, $\widehat h_{ik,b-1}$, as prior information for the current channel block. Similar to training-based estimation, we decompose the $h_{ik,b}$ into the prediction and the prediction error, i.e., $h_{ik,b}=\widehat h^p_{ik,b}+\widetilde h^p_{ik,b}$, where $\widehat h^p_{ik,b}$ and $\widetilde h^p_{ik,b}$ are independent.
Based on $\widehat h_{ik,b-1}$ and the channel evolution equation in \eqref{GM}, a linear predictor is adopted to predict the CSI of the remaining MUs, i.e.,
\begin{equation}\label{predict}
\widehat h^p_{ik,b}=c_{ik}\widehat h_{ik,b-1},k\in\mathcal{K}\slash \mathcal{K}_{T},
\end{equation}
which is the best linear unbiased prediction (BLUP) for the first-order Gauss-Markov model \cite{predictor}.

Linear prediction requires no training overhead and lower computational complexity, while training-based estimation provides more accurate CSI according to \eqref{scalarestimation}, \eqref{predict}. Thus, we balance the use of the training-based estimation and linear prediction in each channel block in the proposed selective training method.
\subsection{Data Transmission}\label{transmission}
In the data transmission phase, all the MUs send their data simultaneously, and the received signal $\mathbf{Y}_d$ at the BS is given as
\begin{equation}\label{data}
{\mathbf{Y}_d} = \widehat{\mathbf{ H}}(\tau ){\mathbf{X}_d} + \underbrace {\widetilde{\mathbf{ H}}(\tau ){\mathbf{X}_d} + \mathbf{N}}_{\mathbf{Q}(\tau )}.
\end{equation}
The BS jointly decodes the data signals from all MUs based on the obtained channel $\mathbf{\widehat H}(\tau)$, i.e., the BS treats $\mathbf{\widehat H}(\tau)\mathbf{X}_d$ as the desired signal and $\mathbf{Q}(\tau)$ as the equivalent noise, which includes the additive Gaussian noise, channel estimation error and channel prediction error.
\subsection{Capacity Maximization Problem}\label{2D}
According to Section \ref{uplinktraining} and Section \ref{transmission}, the channel capacity in the data transmission phase depends on the training length $\tau$ and the trained MU set $\mathcal{K}_T$, which can be expressed as $C(\tau,\mathcal{K}_T)=\sup\nolimits_{P_{X_d}(\cdot)} \text{\ }\frac{1}{K}I(\mathbf{Y}_d, \mathbf{X}_d; \mathbf{\widehat H}(\tau,\mathcal{K}_T))$. Considering that only a fraction of the total coherence block length, i.e., $(1-\frac{\tau}{T_{0}})$, is used for data transmission, the effective capacity is given by $(1-\frac{\tau}{T_0})C(\tau,\mathcal{K}_T)$. To enable practical implementation, we assume the training length $\tau$ is the same in all channel blocks, while $\mathcal{K}_T$ is designed dynamically for each block. Thus, we adopt the average effective capacity over $J$ channel blocks as the objective function. For simplicity, we assume the MUs' locations and temporal correlation coefficients are static within the considered $J$ blocks as prior information. As a result, the capacity maximization problem can be formulated as:\\
\begin{equation}\label{originalproblem}
\begin{array}{l}
\text{\ }\mathop {\max}\limits_{\tau } \text{\ } \frac{1}{J}\sum \limits_{b=1}^J (1-\frac{\tau}{T_0})C_b(\tau,\mathcal{K}_T^{b,\tau}) )\\
\text{\ \ \ }\mathrm{s.t.} \text{\ \;} 0\le\tau\le T_0,
\end{array}
\end{equation}
where $\mathcal{K}_T^{b,\tau}$ is the optimal training set for a given $(b,\tau)$. Designing $\mathcal{K}_T^{b,\tau}$ and optimizing $\tau$ are two important components of the proposed selective training method, which will be elaborated in the next section.
\section{Selective Training with Dynamic User Selection}\label{part3}
In this section, we will first introduce a lower bound of the channel capacity considering estimation error and prediction error, which will be used as the performance metric for the later training design. We will then propose a dynamic user selection (DUS) method to determine $\mathcal{K}_T^{b,\tau}$ for given $b$ and $\tau$, and then optimize the training length $\tau$.

\subsection{A Lower Bound of the Channel Capacity}
 In the $b$-th block, as the capacity of the channel described by \eqref{data} remains unknown \cite{optimal_channel}, we will use a lower bound of the normalized capacity $(1-\frac{\tau}{T_0})C_b(\tau,\mathcal{K}_T^{b,\tau})$ to evaluate the system performance, which will be called the {\emph {achievable rate}}, denoted as $R_b(\tau,\mathcal{K}_T^{b,\tau})$. It is obtained by regarding the term $\mathbf{\widehat H}_b(\tau,\mathcal{K}_T^{b,\tau})$ as the actual channel matrix and the equivalent noise $\mathbf{Q}_b(\tau,\mathcal{K}_T^{b,\tau})$ as independent complex Gaussian noise with covariance matrix $\mathbf{K}_{Q,b}(\tau,\mathcal{K}_T^{b,\tau})\in \mathbb{R}_{+}^{N \times N}$ \cite{how_much}, i.e.,
\begin{equation}\label{totalnoise}
\begin{split}
{\mathbf{K}_{Q,b}}(\tau,\mathcal{K}_T^{b,\tau})
&= \mathbb{E}\left[ {\mathbf{Q}_b{\mathbf{Q}_b^H}} \right]\\
&= \mathrm{diag}\left( \bigg\{{1 + \sum\limits_{k = 1}^K {{{\widetilde v}_{ik,b}}(\tau,\mathcal{K}_T^{b,\tau} )} } \bigg\}_{i = 1}^N\right).
\end{split}
\end{equation}
We define
$\mathbf{\overline H}_b  \buildrel \Delta \over = \mathbf{K}_{Q,b}^{ - \frac{1}{2}}(\tau,\mathcal{K}_T^{b,\tau} )\mathbf{\widehat H}_b(\tau,\mathcal{K}_T^{b,\tau} )$.
Considering the fraction for data transmission $(1-\frac{\tau}{T_{0}})$, the achievable rate per user can be written as
\begin{small}
\begin{equation}\label{rate}
\begin{split}
R_b(\tau,\mathcal{K}_T^{b,\tau} ) =
  \left( {1 - \frac{\tau }{{{T_0}}}} \right)\frac{1}{K}
 {\mathbb{E}_{\widehat {\mathbf{H}}_b}}\left[ {\log \left| {{\mathbf{I}_N} + \mathbf{\overline H}_b \mathbf{\overline H}_b {^H}} \right|} \right].
\end{split}
\end{equation}
\end{small}
Thus, the capacity maximization problem is reformulated as
\begin{equation}\label{secondproblem1}
\begin{array}{l}
\text{\ }\mathop {\max}\limits_{\tau} \text{\ } \frac{1}{J}\sum\limits_{b=1}^J R_b(\tau,\mathcal{K}_T^{b,\tau} )\\
\text{\ \ \ }\mathrm{s.t.} \text{\ \ } 0\le \tau \le T_0.
\end{array}
\end{equation}
However, it is still challenging to solve due to the expectation involved in $R_b(\tau,\mathcal{K}_T^{b,\tau})$, and the combinatorial structure of $\mathcal{K}_T^{b,\tau}$. In the following, for a given $(b,\tau)$, we will first propose a dynamic user selection method to determine $\mathcal{K}_T^{b,\tau}$, and provide an accurate approximation for $R_b(\tau,\mathcal{K}_T^{b,\tau})$ to search for the optimal training length $\tau$.

\subsection{Dynamic User Selection}
 In the $b$-th block, we maximize $R_b(\tau,\mathcal{K}_T^{b,\tau})$, which consists of two parts:  $(1 - \frac{\tau }{{{T_0}}})\frac{1}{K}$ and ${\mathbb{E}_{\widehat {\mathbf{H}}_b}}\left[ {\log \left| {{\mathbf{I}_N} + \mathbf{\overline H}_b \mathbf{\overline H}_b {{}^H}} \right|} \right]$, where the first part is only related to $\tau$, while the second part is related to $\tau$ and $\mathcal{K}_T^{b,\tau}$.
In this subsection, we consider a fixed training length $\tau$, and obtain $\mathcal{K}_T^{b,\tau}$ by maximizing the second part via user selection:
\begin{equation}\label{secondproblem}
\begin{array}{l}
\text{\;}\mathop {\max}\limits_{ \mathcal{K}_T^b} \text{\ } {\mathbb{E}_{\widehat H_b}}\left[ {\log \left| {{\mathbf{I}_N} + \mathbf{\overline H}_b (\tau,\mathcal{K}_T^b) \mathbf{\overline H}_b {{(\tau,\mathcal{K}_T^b )}^H}} \right|} \right]\\
\ \ \ \mathrm{s.t.} \text{\ \ } \mathcal{K}_T^b\in \mathcal{K}\\
\ \ \ \ \ \ \ \ \; \mathrm{card}(\mathcal{K}_T^b) \le \tau.\\
\end{array}
\end{equation}
Similar to \eqref{secondproblem1}, \eqref{secondproblem} is still intractable due to the complex objective function. Instead, we will minimize the term inside the logarithmic function, i.e., $\left| {{\mathbf{I}_N} + \mathbf{\overline H}_b (\tau,\mathcal{K}_T^{b}) \mathbf{\overline H}_b {{(\tau,\mathcal{K}_T ^{b})}^H}} \right|$. By substituting $ \mathbf{K}_{Q,b}^{ - \frac{1}{2}}\mathbf{\widehat H}_b$ into $\mathbf{\overline H}_b$, we obtain $\left| {{\mathbf{I}_N} + \mathbf{K}_{Q,b}^{ - \frac{1}{2}}\mathbf{\widehat H}_b\mathbf{\widehat H}_b^H \mathbf{K}_{Q,b}^{ - \frac{1}{2},H}} \right|$, where the equivalent noise $\mathbf{K}_{Q,b}$ has a significant influence. Recall that, the antennas at the BS are co-located in massive MIMO systems. Thus, the entries of the channel vector from the $k$-th MU to the BS have an identical variance, i.e., $\widetilde v_{ik,b}=\beta_{k,b},\forall i$, where $\beta_{k,b},\forall k$ is a constant related to $(b,\tau,\mathcal{K}_T^b)$ and it changes in different channel blocks. We define $\beta_b$ as
\begin{equation}\label{beta}
\beta_b (\tau,\mathcal{K}_T^b)
\buildrel \Delta \over = \sum\limits_{k = 1}^K {{\beta _{k,b} (\tau,\mathcal{K}_T^b)}}
={\sum\limits_{k \in {\mathcal{K}_T^b}} {{\beta_{k,b}^t(\tau)} + \sum\limits_{k \in \mathcal{K}\backslash {\mathcal{K}_T^b}} { \beta_{k,b}^p} } },\nonumber
\end{equation}
where $\beta_{k,b}^t(\tau)$ denotes the estimated channel variance of the trained MUs and $\beta_{k,b}^p$ denotes the predicted channel variance of the remaining MUs. Then, the covariance matrix of the equivalent noise can be written as ${\mathbf{K}_{Q,b}}(\tau,\mathcal{K}_T^b)= (1+\beta_b(\tau,\mathcal{K}_T^b))\mathbf{I}_N$. As a result, the term $\left| {{\mathbf{I}_N} + \mathbf{K}_{Q,b}^{ - \frac{1}{2}}\mathbf{\widehat H}_b\mathbf{\widehat H}_b^H \mathbf{K}_{Q,b}^{ - \frac{1}{2},H}} \right|$ can be rewritten as
\begin{equation}
  \left| {{\mathbf{I}_N} +(1+ \beta_b(\tau,\mathcal{K}_T^b))^{-1}\mathbf{\widehat H}_b\mathbf{\widehat H}_b^H} \right|.
  \end{equation}
 Denote the eigenvalues of $\mathbf{\widehat H}_b\mathbf{\widehat H}_b^H$ as $\{\lambda_1,...,\lambda_K\}$, where $\lambda_k\ge 0, \forall k\in \mathcal{K}$. Thus, the determinant can be written as $\prod\limits_{k = 1}^K {\left( {1 + \frac{1}{{1 + {\beta _b}(\tau ,{\mathcal{K}_T^b})}}{\lambda _k}} \right)}$. For tractability, we assume $\{\lambda_1,...,\lambda_K\}$ does not depend on $\mathcal{K}_T^b$, which is reasonable when $\beta^t_{k,b}(\tau)$ is close to $\beta^p_{k,b}$, i.e., the accuracy of the predicted CSI is close to that of the estimated CSI. As will be shown through simulations, this will help to develop a very effective user selection method. Thus, to improve the achievable rate, we consider minimizing $\beta_b(\tau,\mathcal{K}_T^b)$. In the $b$-th block, for a given $\tau$, we shall solve the following problem:
 \begin{equation}\label{optimization}
\begin{array}{l}
\mathop {\min}\limits_{ \mathcal{K}_T^b} \text{\ } { \sum\limits_{k \in {\mathcal{K}_T^b}} {{\beta_{k,b}^t(\tau)} + \sum\limits_{k \in \mathcal{K}\backslash {\mathcal{K}_T^b}} { \beta_{k,b}^p} } } \\
\  \mathrm{s.t.} \text{\ \ } \mathcal{K}_T^b\in \mathcal{K}\\
\text{\ \ \ \ \ \ \;} \mathrm{card}(\mathcal{K}_T^b) \le \tau.\\
\end{array}
\end{equation}
 We will first specify the terms in the objective function. According to \eqref{scalarestimation} and \eqref{predict}, the variances of the obtained channel based on training and prediction are respectively given as
${\widehat v^t_{ik,b}}(\tau) = \mathrm{Var}({\widehat h^t_{ik,b}}) = \frac{{\tau v_{ik}^2}}{{\tau {v_{ik}} + 1}}$ and
${\widehat v^p_{ik,b}} = \mathrm{Var}({\widehat h^p_{ik,b}}) = c^2_{ik}{\widehat v_{ik,b-1}}$.
According to the previous assumption, the channel statistics remain unchanged, i.e., $v_{ik}$ is constant in the $J$ channel blocks. Due to the orthogonality principle for the MMSE estimates and the independent innovation process, we have the relationship: $v_{ik}=\mathrm{Var}(\widehat {h}_{ik,b})+\mathrm{Var}(\widetilde {h}_{ik,b})=\widehat {v}_{ik,b}+\widetilde {v}_{ik,b}$. Thus, the variances of the estimation error and the prediction error are given by $\beta_{k,b}^t(\tau)={\widetilde v^t_{ik,b}}(\tau ) = v_{ik}-{\widehat v^t_{ik,b}}(\tau )$ and $\beta_{,b}^p={\widetilde v^p_{ik,b}} = v_{ik}-\widehat{v}_{ik,b}^p$, respectively.

In \eqref{optimization}, the set of the trained MUs in the $b$-th channel block needs to be decided. For this purpose, we evaluate the difference between the estimation error and the prediction error for each MU, which is defined as $
\Delta \beta _{k,b}(\tau) \buildrel \Delta \over =\beta_{k,b}^p-\beta_{k,b}^t(\tau),k\in\mathcal{K}$,
where a large $\Delta \beta_{k,b}(\tau)$ means that training for the $k$-th MU can more effectively reduce the error of the obtained CSI. With this setup, we can obtain the solution of problem \eqref{optimization} by Algorithm \ref{selection}. The following result verifies that Algorithm 1 will give the optimal solution to problem \eqref{optimization}.
\begin{algorithm}[h]
\caption{Dynamic User Selection for Problem \eqref{optimization}}
\label{selection}
\begin{algorithmic}[1]
\REQUIRE
$\tau,K,v_{ik}, \widehat {v}_{ik,b-1}$ and $\widetilde {v}_{ik,b-1}$, where $i=1,...,N;k=1,...,K$
\STATE Let $\mathcal{K}=\{ 1,...,K\}$;
\IF{$K\le \tau$}
\STATE $\mathcal{K}_T^{b,\tau}=\mathcal{K}$;
\ELSE
\STATE Calculate the estimation error $\beta_{k,b}^t(\tau),k=1,...,K$;
\STATE Calculate the prediction error $\beta_{k,b}^p,k=1,...,K$;
\STATE Calculate $\Delta \beta _{k,b}(\tau) \buildrel \Delta \over =\beta_{k,b}^p-\beta_{k,b}^t(\tau), k=1,...,K$;
\STATE Obtain $\mathcal{K}_T^{b,\tau}$ by selecting the first $\tau$ largest $\Delta \beta _{k,b}(\tau)$;
\ENDIF
\RETURN $\mathcal{K}_T^{b,\tau}$ and $\mathcal{K}\backslash \mathcal{K}_T^{b,\tau}$
\end{algorithmic}
\end{algorithm}

\begin{lma}
For a fixed $\tau$, optimal solution for problem \eqref{optimization} can be given by Algorithm \ref{selection}.
\end{lma}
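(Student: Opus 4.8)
The plan is to transform the combinatorial objective in \eqref{optimization} into an equivalent form whose structure makes the greedy rule of Algorithm \ref{selection} transparent, and then to verify optimality by a standard exchange argument. First I would add and subtract the prediction errors of the trained users. Writing $B_b^p \buildrel \Delta \over = \sum_{k \in \mathcal{K}} \beta_{k,b}^p$, which is a constant independent of the choice of $\mathcal{K}_T^b$, the objective can be rewritten as
\begin{equation}
\sum_{k \in \mathcal{K}_T^b} \beta_{k,b}^t(\tau) + \sum_{k \in \mathcal{K}\backslash\mathcal{K}_T^b} \beta_{k,b}^p = B_b^p - \sum_{k \in \mathcal{K}_T^b} \big(\beta_{k,b}^p - \beta_{k,b}^t(\tau)\big) = B_b^p - \sum_{k \in \mathcal{K}_T^b} \Delta\beta_{k,b}(\tau).
\end{equation}
Since $B_b^p$ does not depend on $\mathcal{K}_T^b$, minimizing \eqref{optimization} is equivalent to maximizing $\sum_{k \in \mathcal{K}_T^b} \Delta\beta_{k,b}(\tau)$ subject to $\mathrm{card}(\mathcal{K}_T^b) \le \tau$, which is precisely the quantity that the selection step of Algorithm \ref{selection} targets.

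Next I would dispose of the two branches of the algorithm. When $K \le \tau$ the cardinality constraint is inactive, and because each $\Delta\beta_{k,b}(\tau) \ge 0$ — training produces CSI at least as accurate as prediction, so $\beta_{k,b}^t(\tau) \le \beta_{k,b}^p$ — it is never harmful to include a user; taking $\mathcal{K}_T^{b,\tau} = \mathcal{K}$ therefore maximizes the sum, matching line 3. When $K > \tau$, the same non-negativity implies that an optimal set should use the entire budget, i.e. contain exactly $\tau$ users, since appending any further admissible user can only increase the sum.

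Finally I would show that among all subsets of size $\tau$ the one formed by the $\tau$ largest values of $\Delta\beta_{k,b}(\tau)$ is optimal, via an exchange argument: if an optimal set $\mathcal{K}_T^{\star}$ omits some index $j$ belonging to the top $\tau$ while containing some index $m$ lying outside the top $\tau$, then $\Delta\beta_{j,b}(\tau) \ge \Delta\beta_{m,b}(\tau)$, so replacing $m$ by $j$ does not decrease the objective; iterating this swap converts $\mathcal{K}_T^{\star}$ into the greedy top-$\tau$ set without loss, establishing optimality of line 8. I expect the only delicate point to be the justification of $\Delta\beta_{k,b}(\tau) \ge 0$, as it is exactly what guarantees both that filling the full training budget (rather than stopping early) is optimal and that the plain ``select the first $\tau$ largest'' rule coincides with the true optimum rather than a ``top-$\tau$ among non-negative entries'' rule. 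This non-negativity follows from the closed forms $\beta_{k,b}^t(\tau) = v_{ik} - \widehat v^t_{ik,b}(\tau)$ and $\beta_{k,b}^p = v_{ik} - \widehat v^p_{ik,b}$ established after \eqref{optimization}, together with the fact that training yields a variance reduction no smaller than that of prediction, so that $\widehat v^t_{ik,b}(\tau) \ge \widehat v^p_{ik,b}$.
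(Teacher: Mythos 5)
Your proposal follows essentially the same route as the paper's proof: both rewrite the objective as a constant (the all-predicted baseline) minus $\sum_{k\in\mathcal{K}_T^b}\Delta\beta_{k,b}(\tau)$, reduce the problem to maximizing that sum under the cardinality constraint, and conclude that picking the $\min(\tau,K)$ largest differences is optimal. The only difference is that you make explicit two steps the paper leaves implicit --- the exchange argument and the non-negativity of $\Delta\beta_{k,b}(\tau)$ needed to justify exhausting the training budget --- which is a welcome tightening rather than a different approach.
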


\begin{proof}
Since the values of $\tau, K, v_{ik}, \widehat{v}_{ik,b-1}$ and $\widetilde{v}_{ik,b-1}$ are known, in the $b$-th channel block, we first assume that all the CSI is predicted, i.e., $\mathcal{K}_T^b=\varnothing$. The initial value of the objective function of problem \eqref{optimization} is: $F_0 \buildrel \Delta \over ={1 + \sum\nolimits_{k \in \mathcal{K}} { \beta_{k,b}^p}}=1+\sum\nolimits_{k \in \mathcal{K}}[{\widetilde v_{ik,b-1}}(\tau )+ (1-c^2_{ik}){\widehat v_{ik,b-1}}(\tau )]$,
which is a constant. After selecting part of the MUs for training, the value of the objective function becomes
\begin{small}
\[
\begin{split}
F_k&={1 + \sum\limits_{k \in {\mathcal{K}_T^b}} {{\beta_{k,b}^t}(\tau ) + \sum\limits_{k \in \mathcal{K}\backslash {\mathcal{K}_T^b}} { \beta_{k,b}^p} } }\\
&={1 + \sum\limits_{k \in {\mathcal{K}
_T^b}} {({\beta_{k,b}^p-\Delta \beta_{k,b}(\tau)}) + \sum\limits_{k \in \mathcal{K}\backslash {\mathcal{K}_T^b}} { \beta_{k,b}^p} } }\\
&={1 + \sum\limits_{k \in \mathcal{K}} {\beta_{k,b}^p}-\sum\limits_{k\in\mathcal{K}
_T^b}{\Delta \beta_{k,b}(\tau)}}=F_0-\sum\limits_{k\in\mathcal{K}_T^b}{\Delta \beta_{k,b}(\tau)}\\
\end{split}
\]
\end{small}
Thus minimizing $F_k$ is equivalent to maximizing $\sum\nolimits_{k\in\mathcal{K}
_T^b}{\Delta \beta_{k,b}(\tau)}$. Since $\mathcal{K}_T^b\in\mathcal{K}$, $\mathrm{card}(\mathcal{K}_T^b)\le\tau$, we have $\mathrm{card}(\mathcal{K}_T^b)\le \min(\tau,K)$. Therefore, $\sum\nolimits_{k\in\mathcal{K}_T^b}{\Delta \beta_{k,b}(\tau)}$ is maximized by selecting the first $\min(\tau,K)$ largest $\Delta \beta_{k,b}(\tau)$, i.e., problem \eqref{optimization} is solved by selecting the first $\min(\tau,K)$ largest $\Delta \beta_{k,b}(\tau)$ to form the subset $\mathcal{K}_T^{b,\tau}$.
\end{proof}
\subsection{Training Length Optimization}
So far, we have obtained $\mathcal{K}_T^{b,\tau}$ for a given $(b,\tau)$. Nevertheless, optimizing $\tau$ is still intractable, as it is difficult to compute the value of the achievable rate $R_b(\tau,\mathcal{K}_T^{b,\tau})$. As a result, we resort to an approximation based on the theory of large random matrices \cite{optimal_channel}, \cite{equivalent}, where the block index $b$ and $\mathcal{K}_T^{b,\tau}$ will be omitted temporally.
\begin{lma}[Deterministic Equivalent]\label{lemma}
Define $\ {\overline v _{ik}}(\tau ) = {{{{\widehat v}_{ik}}(\tau )}}({{1 + \sum\nolimits_{\ell  = 1}^K {{{\widetilde v}_{ik}}(\tau )} }})^{-1}$ and consider the following $N\times N$ matrices
${\mathbf{D}_k}(\tau ) = \mathrm{diag}({\overline v _{1k}}(\tau ),...,{\overline v _{Nk}}(\tau )),\;k = 1,...,K$. Let $\tau>0$, and assume that $K$ and $N$ satisfy $0 < \mathop {\lim \inf }\limits_{K \to \infty } \frac{N}{K} \le \mathop {\lim \sup }\limits_{K \to \infty } \frac{N}{K} < \infty$ and $0 < {\overline v _{ik}}(\tau ) < {v_{\max }} < \infty ,\;\forall i,k$. The equivalent approximation of the achievable rate \eqref{rate} is given as:
\begin{equation}\label{closedform}
\begin{split}
\begin{array}{l}
\overline R(\tau ) = \left( {1 - \frac{\tau }{{{T_0}}}} \right)\frac{1}{K}\left[ {\sum\limits_{k = 1}^K {\log \left( {1 + \frac{1}{K} \mathrm{tr} {{\bf{D}}_k}(\tau ){{\bf{T}}_P}} \right)} } \right.\\
\text{\;\;\;\;\;\;\;\;\;\;\;\;\;}\left. {{\rm{   }}- \log \det \left( {\frac{1}{K}{{\bf{T}}_P}} \right) - \sum\limits_{k = 1}^K {\frac{{\frac{1}{K} \mathrm{tr}{{\bf{D}}_k}(\tau ){{\bf{T}}_P}}}{{1 + \frac{1}{K} \mathrm{tr} {{\bf{D}}_k}(\tau ){{\bf{T}}_P}}}} } \right],
\end{array}
\end{split}
\end{equation}
where $\mathbf{T}_P$ is given by an implicit equation:
\begin{equation}\label{TQ}
\mathbf{T}_P = {\left( {\frac{1}{K}\sum\limits_{k = 1}^K {\frac{{{\mathbf{D}_k}(\tau )}}{{1 + \frac{1}{K} \mathrm{tr} {\mathbf{D}_k}(\tau )\mathbf{T}_P}} + } \frac{1}{K}{\mathbf{I}_N}} \right)^{ - 1}},
\end{equation}
which admits a unique solution, $\mathbf{T}_P=\mathrm{diag}(t_1,...,t_N)$. It has been shown in \cite{equivalent} that the solution of \eqref{TQ} is unique and can be found by using an iterative algorithm.
Then, the following results hold:
$\mathop {\lim }\nolimits_{K \to \infty } \left[ {R(\tau ) - \overline R (\tau )} \right] = 0$ and $\mathop {\lim }\nolimits_{K \to \infty } \left[ {\tau^* - \overline {\tau}^* } \right] = 0$, where $
{\tau ^*} = \mathop {\max }\nolimits_{\tau  \in [0,{T_0}]} R(\tau )$ and $\overline {\tau} ^*=\mathop{\max}\nolimits_{\tau \in [0,T_0]} \overline {R}(\tau)$.
\end{lma}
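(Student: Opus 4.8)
The plan is to recognize the expectation in \eqref{rate} as the normalized ergodic mutual information of a Gram matrix with a per-column variance profile, and then to invoke (after checking its hypotheses) the deterministic-equivalent machinery of \cite{optimal_channel,equivalent}. First I would write $\mathbf{\overline H}\,\mathbf{\overline H}^H=\sum_{k=1}^K \mathbf{\overline h}_k\mathbf{\overline h}_k^H$, where $\mathbf{\overline h}_k=\mathbf{D}_k(\tau)^{1/2}\mathbf{x}_k$ and $\mathbf{x}_k$ has i.i.d.\ $\mathcal{CN}(0,1)$ entries. This representation holds because $\widehat h_{ik,b}$ in \eqref{scalarestimation}--\eqref{predict} is a linear, hence Gaussian, functional of the channel, the antenna entries are independent across $i$ by the Hadamard model in \eqref{GM}, and the left-multiplication by $\mathbf{K}_{Q,b}^{-1/2}$ turns the covariance of column $k$ into exactly $\mathbf{D}_k(\tau)=\mathrm{diag}(\overline v_{1k},\dots,\overline v_{Nk})$. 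The object is therefore a sum of $K$ independent rank-one terms with diagonal (separable) profiles, which is precisely the setting for which a deterministic equivalent of the Shannon transform is available.

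Next I would establish the equivalent for the resolvent. Define $\mathbf{Q}(z)=(\mathbf{\overline H}\,\mathbf{\overline H}^H-z\mathbf{I}_N)^{-1}$ for $z<0$ and seek a deterministic $\mathbf{\overline Q}(z)$ with $\frac1N\mathrm{tr}\,\mathbf{A}(\mathbf{Q}(z)-\mathbf{\overline Q}(z))\to 0$ for any bounded $\mathbf{A}$. Expanding $\mathbf{Q}$ via the resolvent identity together with the Sherman--Morrison rank-one formula, and replacing each quadratic form $\mathbf{x}_k^H\mathbf{A}\mathbf{x}_k$ by $\frac1N\mathrm{tr}\,\mathbf{A}$ through the trace lemma, produces the fixed-point characterization \eqref{TQ} for $\mathbf{T}_P$. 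Since every $\mathbf{D}_k$ is diagonal, the map in \eqref{TQ} sends diagonal matrices to diagonal matrices, so the unique fixed point is diagonal, $\mathbf{T}_P=\mathrm{diag}(t_1,\dots,t_N)$; its unique solvability and the convergence of the iteration are exactly the standard-interference-function / analytic-continuation results invoked from \cite{equivalent}.

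Then I would integrate to reach \eqref{closedform} and control the randomness. Using the Shannon--Stieltjes identity $\frac1N\log\det(\mathbf{I}_N+\mathbf{\overline H}\,\mathbf{\overline H}^H)=\int_1^\infty\big(\tfrac1\omega+\tfrac1N\mathrm{tr}\,\mathbf{Q}(-\omega)\big)\,d\omega$ and substituting the deterministic equivalent of $\frac1N\mathrm{tr}\,\mathbf{Q}(-\omega)$, the $\omega$-integration yields, after bookkeeping, the closed form \eqref{closedform}. Separately, concentration of the random $\frac1N\log\det(\cdot)$ around its mean---via the Poincar\'e--Nash inequality or a martingale/rank-one-perturbation decomposition, exploiting the Gaussianity of $\mathbf{x}_k$ and the uniform bound $0<\overline v_{ik}(\tau)<v_{\max}$---gives $\frac1N(\log\det-\mathbb{E}\log\det)\to 0$, so the approximation also holds for the expectation appearing in \eqref{rate}. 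Transferring to $R(\tau)$ is then harmless: writing $\frac1K=\frac NK\cdot\frac1N$ and using $0<\liminf N/K\le\limsup N/K<\infty$, the prefactor $(1-\tau/T_0)$, and the uniform bounds on $\overline v_{ik}$, one obtains $R(\tau)-\overline R(\tau)\to 0$ for each fixed $\tau$. For the maximizers, $\tau$ ranges over the bounded interval $[0,T_0]$ (in practice the finite set $\{0,1,\dots,T_0\}$) with $T_0$ fixed, so pointwise convergence upgrades to uniform convergence over this compact domain, whence both the maximal values and the argmaxima converge and $\tau^*-\overline\tau^*\to 0$.

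I expect the main obstacle to be the resolvent/integration step above: rigorously bounding the error terms in the deterministic-equivalent estimate \emph{uniformly} in $N$ and $K$, and justifying the interchange of limit and integration. This is exactly the content delegated to \cite{optimal_channel,equivalent}, so at the present level the substantive task is to verify that our diagonal-profile model satisfies their regularity hypotheses---bounded aspect ratio and profile entries that are uniformly bounded and bounded away from zero---rather than to re-derive the fixed-point analysis and error control from scratch.
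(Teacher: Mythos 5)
The paper gives no proof of this lemma at all: it simply imports the deterministic-equivalent result from \cite{optimal_channel} and \cite{equivalent}, which is exactly what your sketch reduces to after verifying the variance-profile hypotheses. Your outline (rank-one decomposition with per-column diagonal profiles, fixed-point equation for the resolvent, Shannon-transform integration, concentration, and uniform convergence over the compact set of training lengths to transfer the result to the maximizers) is a faithful reconstruction of the cited machinery and is consistent with the paper's treatment.
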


Thus, the capacity maximization problem \eqref{originalproblem} becomes
\begin{equation}\label{reproblem}
\begin{array}{l}
\text{\ }\mathop {\max}\limits_{\tau} \text{\ } \frac{1}{J}\sum\limits_{b=1}^J \overline R_b(\tau,\mathcal{K}_T^{b,\tau})\\
 \text{\ \ \ }\mathrm{s.t.} \text{\ \ } 0 \le \tau \le T_0,
\end{array}
\end{equation}
where $\mathcal{K}_T^{b,\tau}$ is determined by Algorithm \ref{selection}, which makes the search for the optimal training length efficient and practical without the need for searching all the combinations of the subset of $\mathcal{K}$ and training length $\tau$.

So far, we have considered the scenarios where the MU locations are fixed and known, i.e., the channel statistics are fixed. In practice, for different MU locations, the optimal training length may be different. To make the proposed method practical, we can find a common training length for a given value of $K$, but for different MU locations. This common optimal training length can be obtained by maximizing the average achievable rate over different MU locations, where such averaging is highly intractable and can be obtained via simulations. Such an approach is feasible, as the searching for the optimal $\tau$ can be done offline, and it will be effective with the help of Algorithm \ref{selection}. Once the optimal training length is determined, for each channel block, $\mathcal{K}_T^{b,\tau}$ can be dynamically designed based on the optimal training length and channel statistics.

\vspace{-10pt}

\section{Simulation Results}\label{simulation}
In this section, we simulate the proposed selective training scheme for a massive MIMO system. We consider the first-order Gauss-Markov channel model mentioned in Section \ref{system}, where
$g_{ik,b}\sim \mathcal{CN}(0,1)$ denotes the small-scale fading coefficient, and $J=11$ denotes the number of channel blocks for each realization of the MU locations. For the large-scale fading coefficient, $l_{ik}=(\frac{d_k}{d_0})^{-2}$, i.e., the path-loss exponent is 4, where $d_0=1\mathrm{km}$ is the reference distance and $d_k$ denotes the distance from the $k$-th MU to the BS. The innovation process in \eqref{GM} is given by ${z_{ik,b}} = \sqrt {1 - {c^2}} {\left( {\frac{{{d_k}}}{{{d_0}}}} \right)^{ - 2}}{u_{ik,b}}$ \cite{downlink_training}, where ${u_{ik,b}} \sim \mathcal{CN}(0,1)$ and $c_{ik}=c,\forall{i,k}$, with $c={J_0}(2\pi {f_D}\kappa)$ as the temporal correlation coefficient based on the Jake's model, where $J_0$ is the 0-th order Bessel function of the first kind, $f_D$ denotes the maximum Doppler frequency and $\kappa$ represents the channel instantiation interval. We set $c=0.9881$, as in \cite{downlink_training}. We assume the small-scale fading coefficients, innovation process and the observation noise within each block are mutually independent. For selective training methods, we assume that the BS performs full training in the first block and performs selective training from the $2$nd to the $J$-th block. In the training phase, we use the orthogonal training sequence \eqref{othogonal}. In the data transmission phase, we set the received $\mathrm{SNR_0}$ of the MUs located at $d_0$ to 0dB, i.e., $\mathrm{SNR}_0=0dB$, and thus the received SNR of the $k$-th MU can be calculated: $\mathrm{SNR}_k=\mathrm{SNR}_0-40{\log _{10}}(\frac{{{d_k}}}{{{d_0}}}),\forall k\in\mathcal{K}$. The performance is measured by the lower bound of the achievable rate \eqref{rate}.
\subsection{Benchmarks}
For comparison, we first introduce the conventional full training method and a user-scheduling method, as well as a selective training method, but with random user selection. Our proposed dynamic user selection method is denoted as DUS.
\subsubsection{Full Training (FT)} The BS performs training for all the MUs and serves all the MUs in each block, which is commonly assumed in previous works \cite{optimal_channel}.
\subsubsection{Random User Selection (RUS)} This is a selective training scheme. In each block, the BS performs training for part of the MUs, which are selected uniformly and randomly from all the MUs, and the remaining MUs' CSI is predicted according to \eqref{predict}. With the obtained CSI, the BS serves all the MUs simultaneously.
\subsubsection{User Scheduling (US)} One method to reduce training overhead is to train and serve a subset of the MUs, which is similar to multiuser scheduling in conventional multiuser MIMO channels. But differently, the scheduling of users should be based on channel statistics, and here we propose to train and serve the MUs that are closest to the BS, which will give high spectral efficiency.
\vspace{-10pt}
\begin{figure}[htbp]
\centering\includegraphics[height=6.3cm]{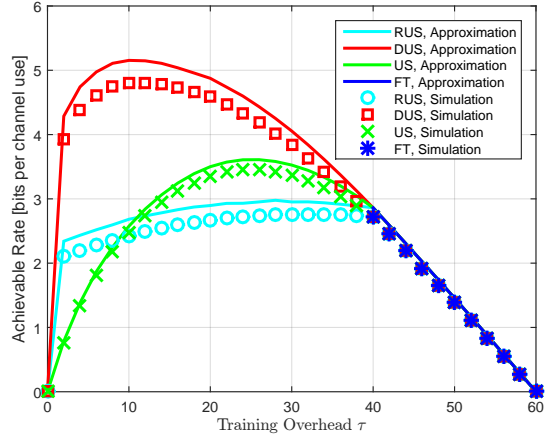}
\vspace{-10pt}
\caption{Average achievable rate vs. Training overhead.}\label{trainingoverhead}
\end{figure}
\vspace{-10pt}
\subsection{Average Achievable Rate vs. Training Overhead}
Consider a massive MIMO system with one 100-antenna BS located at the center $(0,0)$ and $K=40$ single-antenna MUs whose positions are uniformly and independently generated in a circular region with $\mathrm{radius}=1\mathrm{km}$. The block length is $T_0=60$, i.e., $\alpha=\frac{K}{T_0}=\frac{2}{3}$. Fig.\ref{trainingoverhead} shows two sets of curves: one is based on the approximations in \eqref{closedform}, and the other is the simulation results averaged over $10^4$ randomly generated channel realizations. For the RUS and US, the BS selects $\mathrm{min}(\tau,K)$ MUs for a given $\tau$, and for RUS, there are $10^3$ random user selection realizations in each block. As a result of the orthogonal training assumption, the curves for different methods overlap when $40\le \tau \le 60$, and the curves for the FT case start from $\tau=40$. From this figure, we can observe that $\overline {R}(\tau)$ is a good approximation of ${R}(\tau)$. We can also see that the average achievable rate decreases almost linearly for the conventional FT methods as the training overhead increases, which shows that the training overhead incurs a significant throughput degradation in such a setting. By optimizing the training overhead, we can get a better performance. The average throughput corresponding to the optimal training length of the proposed DUS training method is the largest among all the methods. The gap between selective training with DUS and the FT case results from the different training overheads, as that of the former is greatly reduced. Besides, due to only training for part of MUs, the channel estimation complexity is much reduced. On the other hand, the gap between the DUS and RUS methods shows that the proposed DUS method achieves more effective training by dynamically selecting MUs via Algorithm \ref{selection}. Compared to the US, selective training with DUS achieves much higher spectrum efficiency by exploiting the temporal correlation to obtain the CSI for all the MUs and serving all the MUs in each block.
\begin{figure}[htbp]
\centering\includegraphics[height=6.3cm]{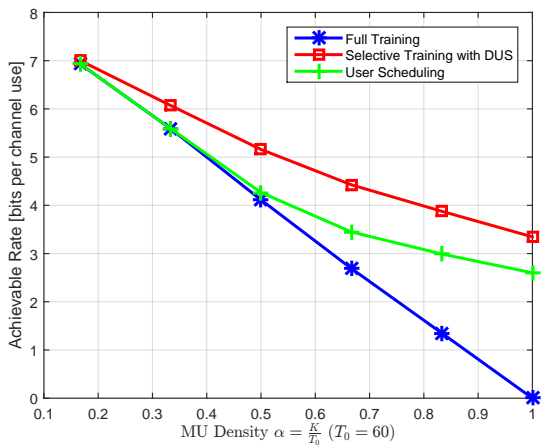}
\vspace{-10pt}
\caption{Average achievable rate vs. User density.}\label{CapacityDensity}
\end{figure}

\subsection{Average Achievable Rate vs. User Density}
Consider a massive MIMO system with one 100-antenna BS located at the center $(0,0)$ and $K$ single-antenna MUs that are uniformly and independently distributed in a circular region with $\mathrm{radius}=1\mathrm{km}$. Note that the achievable rate is averaged over $10^2$ randomly generated MUs' distribution realizations and $10^3$ randomly generated channel realizations. The block length is $T_0=60$. $K$ varies from 10 to 60, i.e., $\alpha$ varies from $\frac{1}{6}$ to $1$, to explore the influence of the MU density on the different methods. The RUS method is omitted due to its poor performance. We firstly search the optimal training length $\tau$ corresponding to the maximal approximation $\overline R(\tau)$ for different methods and parameters. Then the obtained $\tau$ is used for the simulation, based on which, the $\mathcal{K}_T$ is designed for the selective training with DUS in each block. From Fig.\ref{CapacityDensity}., we observe that the average achievable rate of the conventional FT case decreases almost linearly as $\alpha$ increases, and thus it is not applicable with dense MUs. The proposed selective training scheme with DUS performs the best among all the methods, which shows the effectiveness of our proposal. As $K$ increases, the performance gain of the proposed method becomes larger, i.e., the selective training is more effective, especially in the networks with dense MUs. It is worthwhile to mention that, the offline search for $\tau$ makes the proposed method practical, and selective training will also significantly reduce the estimation complexity.
\vspace{-5pt}
\section{Conclusions}\label{conclusion}
In this paper, we investigated the uplink training for massive MIMO systems with time-correlated channels. A selective training method with dynamic user selection was proposed, which can help to greatly reduce the training overhead by training only part of the MUs in each block. The proposed selective training method was shown to perform much better than conventional full training methods. Overall, this study has provided some promising results for massive MIMO systems with dense MUs, which previously was believed not to be workable due to the huge training overhead. The results of this paper have shown that with innovative training schemes, and by exploiting the temporal correlation of channels, it is possible to support MUs with the number comparable to the channel coherent length. Further investigation will be needed to continue this line of research, to make the proposed method more practical and extend it to other systems.

%
%



\bibliographystyle{IEEEtran}
%

\vspace{-10pt}

\end{document}